\newtheorem{theorem}{Theorem}[section]
\newtheorem{lemma}{Lemma}[section]
\newtheorem{definition}{Definition}[section]
\newtheorem{problem}{Problem}
\newcommand{\argmin}{\mathop{\rm arg~min}\limits}
\title{RM-CVaR: Regularized Multiple $\beta$-CVaR Portfolio}
\author{
Kei Nakagawa$^1$\and
Shuhei Noma$^1$\and
Masaya Abe$^1$
\affiliations
$^1$Innovation Lab, Nomura Asset Management Co.,Ltd. ,Japan\\
\emails
\{kei.nak.0315, shuhei.mi.st, masaya.abe.428\}@gmail.com
}
\begin{document}

\maketitle

\begin{abstract}
The problem of finding the optimal portfolio for investors is called the portfolio optimization problem that mainly uses the expectation and variability of return (i.e., mean and variance).
Although the variance would be the most fundamental risk measure to be minimized, it has several drawbacks. 
Conditional Value-at-Risk (CVaR) is a relatively new risk measure that addresses some of the shortcomings of the well-known variance-related risk measures, and because of its computational efficiencies has gained popularity.
CVaR is defined as the expected value of the loss that occurs beyond a certain probability level ($\beta$).
However, portfolio optimization problems that use CVaR as a risk measure are formulated with a single $\beta$ and may output significantly different portfolios depending on how the $\beta$ is selected.
We confirm even small changes in $\beta$ can result in huge changes in the whole portfolio structure.
In order to improve this problem, we propose RM-CVaR: Regularized Multiple $\beta$-CVaR Portfolio.
We perform experiments on well-known benchmarks to evaluate the proposed portfolio.
Compared with various portfolios, RM-CVaR demonstrates a superior performance of having both higher risk-adjusted returns and lower maximum drawdown.
\end{abstract}

\section{Introduction}

The problem of finding the optimal portfolio for investors is called the portfolio optimization problem that mainly uses the expectation and variability of return (i.e., mean and variance\cite{markowitz1952portfolio}).
Although the variance would be the most fundamental risk measure to be minimized, it has several drawbacks. 
Controlling the variance does not only lead to low deviation from the expected return on the downside, but also on the upside.
Hence, quantile based risk measures have been suggested such as Value-at-Risk (VaR) that manage and control risk in terms of percentiles of the loss distribution. 
Instead of regarding both upside and downside of the expected return, VaR considers only the downside of the expected return as risk and represents the predicted maximum loss with a specified confidence level (e.g., 95\%).
VaR is incorporated into several regulatory requirements, like Basel Accord I\hspace{-.1em}I, and hence plays a particularly important role in risk analysis.
However, VaR, if studied in the framework of coherent risk measures \cite{artzner1999coherent}, lacks subadditivity, and therefore convexity,
in the case of general loss distributions (although it may be subadditive for special classes of them, e.g. for normal distributions). 
This drawback entails both inconsistencies with the well-accepted principle of diversification (diversification reduces risk).
For example, VaR of two different investment portfolios may be greater than the sum of the individual VaRs.
Also, VaR is nonconvex and nonsmooth and has multiple local minimum, while we seek the global minimum \cite{mcneil2005quantitative}. 
Besides, both variance and VaR ignores the magnitude of extreme or rare losses by their definition. 

Conditional Value-at-Risk (CVaR) that addresses these shortcomings of the variance and VaR is a relatively new risk measure and has gained popularity.
CVaR is defined as the expected value of the loss that occurs beyond a certain probability level ($\beta$).
\cite{pflug2000some} proved that CVaR is a coherent risk measure having subadditivity and convexity.
Additionally, \cite{rockafellar2000optimization} show that the minimization of CVaR results in a tractable optimization problem. 
For example, when the loss is defined as the minus return and a finite number of historical observations of returns are used in estimating CVaR, its minimization can be written as a
linear program and solved efficiently.

However, portfolio optimization problems that use CVaR as a risk measure are formulated with a single $\beta$ and may output significantly different portfolios depending on how the $\beta$ is selected.
We evaluate how the portfolio changes as the $\beta$ level changes with well-known benchmarks.
This is similar to the ''error maximization'' that \cite{michaud1989markowitz} points out in the case of the mean-variance portfolio.
\cite{ardia2017impact,nakagawa2018risk} empirically showed minimum variance portfolio weights are highly sensitive to the inputs.
We confirm even small changes in $\beta$ can result in huge changes in the whole portfolio structure.

On the other hand, many papers either ignore transaction costs or only subtract ad hoc transaction costs afterward \cite{shen2014doubly}.
When buying and selling assets on the markets, the investors incur in payment of commissions and other costs, globally defined transaction costs, that are charged by the brokers or the financial institutions playing the role of intermediary. 
Transaction costs represent the most important feature to account for when selecting a real portfolio, since they diminish net returns and reduce the amount of capital available for future investments \cite{mansini2015portfolio}.

In order to improve these problems, in this paper, we propose RM-CVaR: Regularized Multiple $\beta$-CVaR portfolio to bridge the gap between risk minimization and cost reduction. 
To control transaction cost, we impose $L$1-regularization term as in \cite{demiguel2009generalized,shen2014doubly}.
We prove that the RM-CVaR Portfolio optimization problem is written as a linear programming problem like the single $\beta$-CVaR portfolio.
We also perform experiments to evaluate the proposed portfolio.
Compared with various portfolios, the RM-CVaR portfolio demonstrates a superior performance of having both higher risk-adjusted returns and lower maximum drawdown.



\section{Related Work}
Because of these regularization and sparsity-inducing properties, there has been substantial
recent interest in $L$1-regularization in the statistics and optimization, beginning with \cite{tibshirani1996regression}.
Our approach is similar in spirit to \cite{zou2008regularized}, that estimates the simultaneously estimating multiple conditional quantiles.

The more conventional regularization models have been investigated for the Markowitz's portfolio optimization problem by \cite{demiguel2009generalized,fan2012vast,shen2014doubly}.
They show superior portfolio performances when various types of norm regularities are combined.
Analogously, \cite{gotoh2011role} consider $L$1 and $L$2-norms for the mean-CVaR problem.
Our paper extends this literature to multiple CVaR.

It is not hard to see that there is a connection between the portfolio optimization (risk minimization) and the optimization in machine learning \cite{gotoh2014interaction}.
Both estimate models that would achieve good out-of-sample performance. 

\cite{shen2015portfolio,shen2016portfolio} propose to employ the bandit learning framework to
attack portfolio problems.
\cite{shen2015portfolio} presented a bandit algorithm for conducting online portfolio choices by effectually exploiting correlations among multiple arms. 
\cite{shen2016portfolio} proposed an online algorithm that leverages Thompson sampling into the sequential decision-making process for portfolio blending. 
Also, \cite{shen2017portfolio,shen2019kelly} apply a subset resampling algorithm into the mean-variance portfolio and the Kelly growth optimal portfolio to obtain promising results. 
Through resampling subsets of the original large datasets, \cite{shen2017portfolio,shen2019kelly} constructed the associated subset portfolios with more accurately estimated parameters without requiring additional data.
However, these studies do not take transaction costs or turnover into account.

Interactions from portfolio optimization to machine learning include \cite{gotoh2005linear,takeda2008nu}.
\cite{gotoh2005linear} first have pointed out the common mathematical structure employed both in the class of machine learning methods known as $\nu$-support vector machines ($\nu$-SVMs) and in the
CVaR minimization. 
On the other hand, \cite{takeda2008nu} were the first to point out that the model of \cite{gotoh2005linear} is equivalent to the machine learning methods called E$\nu$-SVC.

Applying our method to machine learning algorithms is a major future task.

\section{Preliminary}
In this section, we define VaR and CVaR, and then formulate a portfolio optimization problem using them.

Let $r_i$ be the return of stock $i (1 \leq i \leq n)$ and $w_i$ be the portfolio weight for stock $i$.
Here, $r_i$ is a random variable and follows the continuous probability density function $p(r)$.

We denote $r = (r_1,..., r_n)^T $ and $w = (w_1,..., w_n)^T$.
Let $L (w, r)$ be loss function e.g. $L (w, r) = - w^T r$.
The probability that the loss function is less than $\alpha$ is 
\begin{equation}
    \Phi(w,\alpha) = \int_{L (w, r)\leq \alpha} p(r) dr
\end{equation}

When $w$ is fixed, $\Phi(w,\alpha)$ is non-decreasing as a function of $\alpha$ and is continuous from the right, but is generally not continuous from the left.
For simplicity, assume $\Phi(w,\alpha)$ is continuous function with respect to $\alpha$.
Then, VaR and CVaR are defined as follows (Figure \ref{VaR_CVaR}).

\begin{definition}
\begin{equation}
    \barapp{VaR}{w}{\beta}
    := \alpha_{\beta}(w)
    = \min(\alpha:\Phi(w,\alpha)>\beta) 
\end{equation}
\end{definition}

\begin{definition}
\begin{align}
    \barapp{CVaR}{w}{\beta}
    & := \barapp{\phi}{w}{\beta} \\ \nonumber
    & = (1-\beta)^{-1}\int_{L(w,r) \geq \alpha_{\beta}(w)}L(w,r)p(r)dr
\end{align}
\end{definition}

\begin{figure}
  \centering
  \includegraphics[width=0.4\textwidth]{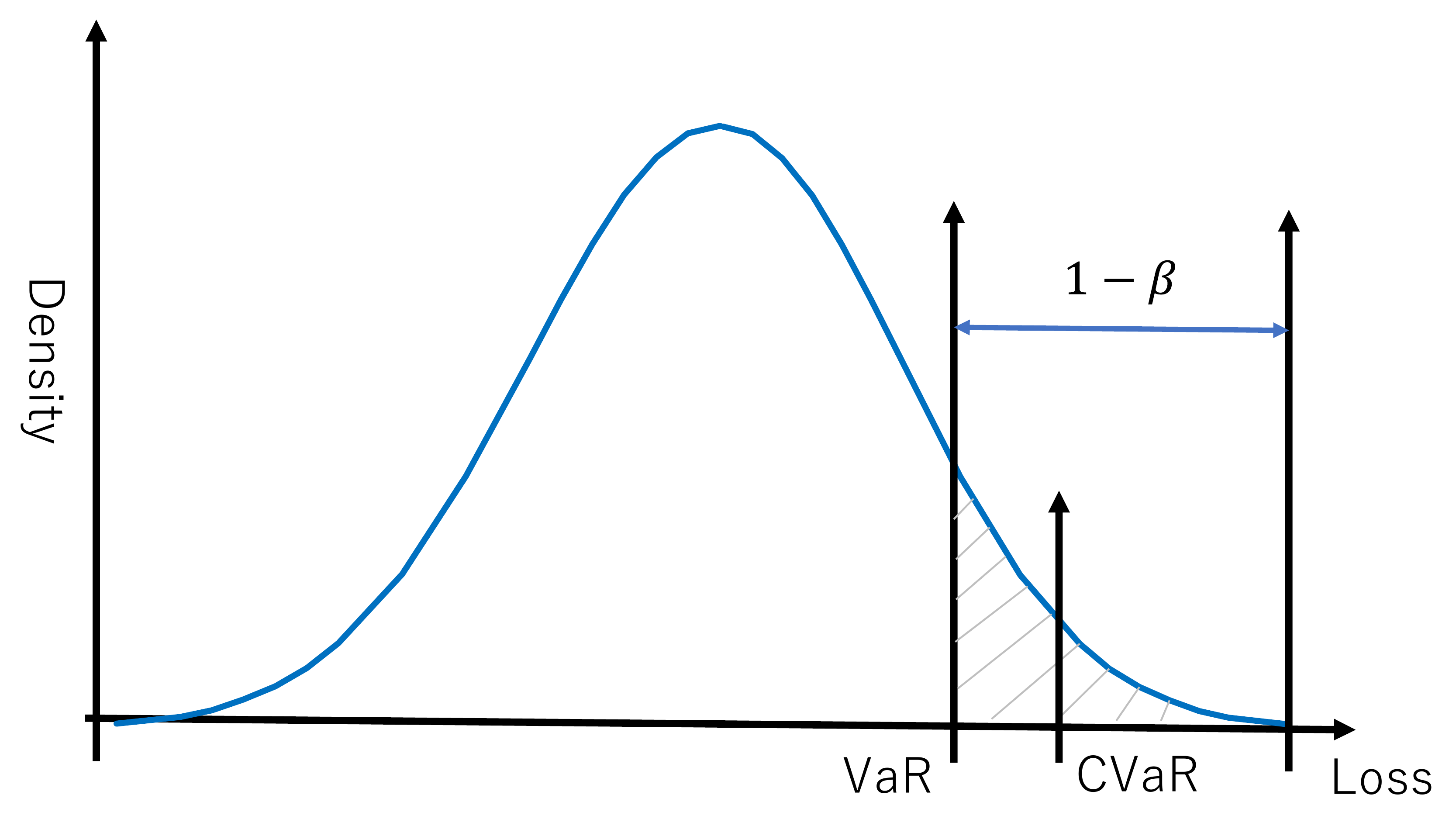}
  \caption{Illustration of VaR and CVaR.}
  \label{VaR_CVaR}
\end{figure}

It is difficult to directly optimize the above CVaR because the integration interval depends on VaR.
Therefore, to calculate $\phi_{\beta}(w)$, we also define $\barapp{F}{w,\alpha}{\beta}$ as
\begin{definition}

\begin{equation}
    \barapp{F}{w,\alpha}{\beta}
    := \alpha + (1-\beta)^{-1}\int_{R^n}[L(w,r)-\alpha]^+p(r)dr
    \end{equation}
    where $[t]^+ := \max(t,0)$.
\end{definition}

Then, the following relationship holds between
$\barapp{\phi}{w}{\beta}$ and
$\barapp{F}{w,\alpha}{\beta}$.
%
\begin{lemma}
\label{lemma1}
    For an arbitrarily fixed $w$,
    $\barapp{F}{w,\alpha}{\beta}$ is convex and continuously differentiable as a function of $\alpha$.
    $\barapp{\phi}{w}{\beta}$ is given by minimizing
    $\barapp{F}{w,\alpha}{\beta}$ with respect to $\alpha$.
    \begin{equation}
        \min_{\alpha} \barapp{F}{w,\alpha}{\beta}
        =
        \barapp{\phi}{w}{\beta}
    \end{equation}
    In this formula, the set consisting of the values of $\alpha$ for which the minimum is attained, namely
    \begin{equation}
        A_{\beta}
        =
        \argmin_{\alpha} \barapp{F}{w,\alpha}{\beta}
    \end{equation}
    is a nonempty closed bounded interval.
\end{lemma}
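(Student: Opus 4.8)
The plan is to follow the standard Rockafellar--Uryasev argument, using the continuity assumption on $\Phi$ throughout. First I would establish convexity of $\barapp{F}{w,\alpha}{\beta}$ in $\alpha$: for each fixed $r$ the integrand $[L(w,r)-\alpha]^+$ is the pointwise maximum of the two affine functions $L(w,r)-\alpha$ and $0$, hence convex in $\alpha$; integrating against the nonnegative density $p(r)$ preserves convexity, and adding the affine term $\alpha$ leaves the sum convex. This part is routine and needs no regularity assumptions.

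The central step, and the main obstacle, is to show continuous differentiability and to identify the derivative. The integrand has a kink where $L(w,r)=\alpha$, so it cannot be differentiated naively. The key observation is that continuity of $\Phi(w,\cdot)$ means the distribution of $L(w,r)$ has no atoms, so for each fixed $\alpha$ the kink set $\{r: L(w,r)=\alpha\}$ has probability zero. This permits differentiation under the integral sign and yields
\[
\frac{\partial}{\partial\alpha}\barapp{F}{w,\alpha}{\beta} = 1 - (1-\beta)^{-1}\bigl(1-\Phi(w,\alpha)\bigr).
\]
Since $\Phi(w,\cdot)$ is continuous, this derivative is continuous in $\alpha$, giving the claimed continuous differentiability. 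Justifying the interchange of differentiation and integration rigorously, via a dominated-convergence or difference-quotient estimate that controls the contribution near the kink, is where the real care is required.

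With the derivative in hand the remaining claims follow quickly. Setting it to zero gives the first-order condition $\Phi(w,\alpha)=\beta$, so the minimizers are exactly the solutions of $\Phi(w,\alpha)=\beta$, one of which is $\alpha_\beta(w)=\barapp{VaR}{w}{\beta}$. As $\alpha\to-\infty$ the derivative tends to $-\beta/(1-\beta)<0$ and as $\alpha\to+\infty$ it tends to $1>0$, so $\barapp{F}{w,\alpha}{\beta}$ is coercive; by the intermediate value theorem applied to the continuous derivative the minimum is attained, proving $A_\beta$ is nonempty and bounded. Convexity makes the minimizer set convex, hence an interval, and continuity of the derivative makes it closed. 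Finally I would evaluate $\barapp{F}{w,\alpha}{\beta}$ at $\alpha=\alpha_\beta(w)$ and use $\int_{L(w,r)\ge\alpha_\beta(w)}p(r)\,dr = 1-\Phi(w,\alpha_\beta(w)) = 1-\beta$ to cancel the $\alpha_\beta(w)$ terms, leaving exactly $(1-\beta)^{-1}\int_{L(w,r)\ge\alpha_\beta(w)}L(w,r)p(r)\,dr = \barapp{\phi}{w}{\beta}$, which establishes $\min_\alpha \barapp{F}{w,\alpha}{\beta} = \barapp{\phi}{w}{\beta}$.
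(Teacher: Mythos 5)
Your proposal is correct and is essentially the same approach as the paper's: the paper gives no argument of its own, deferring entirely to Rockafellar and Uryasev (2000), and what you have written is precisely the argument of that reference --- convexity via pointwise maxima of affine functions, the derivative formula $1-(1-\beta)^{-1}\bigl(1-\Phi(w,\alpha)\bigr)$ obtained by differentiating under the integral using atomlessness of the loss distribution, the first-order condition $\Phi(w,\alpha)=\beta$ (whose solution set, by monotonicity and continuity of $\Phi(w,\cdot)$ with $0<\beta<1$, is a nonempty closed bounded interval containing $\alpha_\beta(w)$), and cancellation at $\alpha=\alpha_\beta(w)$ using $1-\Phi(w,\alpha_\beta(w))=1-\beta$. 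The one step you flag as delicate, the interchange of differentiation and integration, closes immediately because $\alpha\mapsto[t-\alpha]^+$ is $1$-Lipschitz, so the difference quotients are dominated by the constant $1$, which is integrable against the density $p$.
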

%
\begin{proof}
    Proof are given in \cite{rockafellar2000optimization}.
\end{proof}

CVaR is defined by the value of VaR, but it is possible to obtain the CVaR without obtaining VaR according to this Lemma.
If $X$ is a constraint that the portfolio must satisfy, the following Lemma holds for the formulation of a portfolio optimization problem using CVaR as the risk measure.
\begin{lemma}
\label{lemma2}
    Minimizing the CVaR over all $w \in X$ is equivalent to
    minimizing $\barapp{F}{w,\alpha}{\beta}$
    over all $(w,\alpha)\in X \times R$,
    in the sense that
    \begin{equation}
        \min_{w \in X} \barapp{\phi}{w}{\beta}
        =
        \min_{(w,\alpha)\in X \times R}
            \min_{\alpha} \barapp{F}{w,\alpha}{\beta}
        .
    \end{equation}
    Furthermore, if $f(w, r)$ is convex with respect to $w$,
    then $\barapp{F}{w,\alpha}{\beta}$ is convex
    with respect to $(w,\alpha)$,
    and $\barapp{\phi}{w}{\beta}$ is convex with respect to $w$.
    If $X$ is a convex set,
    the minimization problem of $\barapp{\phi}{w}{\beta}$ on $w \in X$ can be formulated
    as a convex programming problem.
\end{lemma}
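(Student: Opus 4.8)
The plan is to derive both the equivalence of the minimization problems and the convexity claims directly from Lemma~\ref{lemma1}, treating the three assertions in turn. For the equivalence, I would invoke the elementary identity that iterated minimization coincides with joint minimization: for any function of two groups of variables, $\min_{w\in X}\min_{\alpha}\barapp{F}{w,\alpha}{\beta} = \min_{(w,\alpha)\in X\times R}\barapp{F}{w,\alpha}{\beta}$. Combining this with the inner identity $\min_{\alpha}\barapp{F}{w,\alpha}{\beta}=\barapp{\phi}{w}{\beta}$ supplied by Lemma~\ref{lemma1} yields the stated equality $\min_{w\in X}\barapp{\phi}{w}{\beta}=\min_{(w,\alpha)\in X\times R}\barapp{F}{w,\alpha}{\beta}$. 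I would also record the correspondence of minimizers: if $(w^*,\alpha^*)$ attains the joint minimum, then $w^*$ minimizes $\barapp{\phi}{w}{\beta}$ over $X$ and $\alpha^*\in A_{\beta}$, which is precisely what makes the reformulation useful in practice. Existence of such a pair is not at issue, since Lemma~\ref{lemma1} already guarantees that $A_{\beta}$ is nonempty.

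For the convexity of $\barapp{F}{w,\alpha}{\beta}$ in $(w,\alpha)$, I would argue through the integrand and then integrate. Fix $r$. Since $L(w,r)$ is convex in $w$ by hypothesis and $-\alpha$ is affine, the map $(w,\alpha)\mapsto L(w,r)-\alpha$ is convex; and since $[t]^+=\max(t,0)$ is convex and nondecreasing, the composition rule (a nondecreasing convex function of a convex function is convex) shows that $(w,\alpha)\mapsto[L(w,r)-\alpha]^+$ is convex for each $r$. Integrating against the nonnegative measure $p(r)\,dr$ preserves convexity, and adding the affine term $\alpha$ and scaling by the positive constant $(1-\beta)^{-1}$ leaves it intact, so $\barapp{F}{w,\alpha}{\beta}$ is jointly convex. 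The convexity of $\barapp{\phi}{w}{\beta}=\min_{\alpha}\barapp{F}{w,\alpha}{\beta}$ in $w$ then follows from the standard fact that partially minimizing a jointly convex function over one of its arguments yields a convex function of the remaining argument.

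Finally, the convex-programming claim is essentially a corollary. If $X$ is convex, then $X\times R$ is convex, so by the previous paragraph $\min_{(w,\alpha)\in X\times R}\barapp{F}{w,\alpha}{\beta}$ is the minimization of a convex function over a convex set, i.e. a convex program, and by the equivalence established first it computes $\min_{w\in X}\barapp{\phi}{w}{\beta}$. This is the form actually exploited later, since $\barapp{F}{w,\alpha}{\beta}$ admits the closed form above and, after replacing the integral by a sample average of historical returns, reduces to a linear program.

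I expect the main obstacle to be the partial-minimization step, namely verifying that $\min_{\alpha}\barapp{F}{w,\alpha}{\beta}$ is convex in $w$, which genuinely requires the \emph{joint} convexity of $\barapp{F}{w,\alpha}{\beta}$ rather than mere convexity in each argument separately. The cleanest route is to appeal to the general theorem on the infimal projection of a convex function, but one must check that the infimum is attained and finite; here Lemma~\ref{lemma1} does exactly that work, guaranteeing that $A_{\beta}$ is a nonempty bounded interval, so no additional recession or coercivity analysis is needed. The measure-theoretic points — that integration preserves convexity and that $\barapp{F}{w,\alpha}{\beta}$ is finite — are covered by the integrability assumptions implicit in the definition of CVaR, and I would defer those as routine.
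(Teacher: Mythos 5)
Your proof is correct, but note that the paper itself offers no argument at all for this lemma: its ``proof'' is a one-line citation to Rockafellar and Uryasev (2000). What you have written is, in effect, a faithful reconstruction of the argument in that cited source (their Theorem~2): the equivalence via the elementary identity that iterated minimization over $w$ then $\alpha$ equals joint minimization over $(w,\alpha)$, combined with Lemma~\ref{lemma1}; joint convexity of $\barapp{F}{w,\alpha}{\beta}$ from the composition rule (the nondecreasing convex map $[\,\cdot\,]^+$ applied to the jointly convex $L(w,r)-\alpha$) followed by integration against $p(r)\,dr$ and addition of the affine term $\alpha$; and convexity of $\barapp{\phi}{w}{\beta}$ by infimal projection of a jointly convex function. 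You also correctly identify the one genuinely delicate point --- that partial minimization requires \emph{joint} convexity, not separate convexity, and that Lemma~\ref{lemma1} supplies the attainment needed to avoid any recession analysis --- and your added remark on the correspondence of minimizers ($w^*$ optimal and $\alpha^*\in A_{\beta}$) matches the full statement in the original reference. So your route buys self-containedness where the paper buys brevity; there is no gap, and nothing in your argument conflicts with what the paper relies on.
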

\begin{proof}
    Proof are given in \cite{rockafellar2000optimization}.
\end{proof}

We approximate the function $\barapp{F}{w,\alpha}{\beta}$ by sampling a random variable $r$ from the density function $p(r)$.
When ${r[1], r[2], ..., r[q]}$ are obtained by sampling or simple historical data, the function $\barapp{F}{w,\alpha}{\beta}$ is approximated as follows.
\begin{equation}
\barapp{F}{w,\alpha}{\beta}=\alpha + (q(1-\beta))^{-1}\sum_{k=1}^q[-w^T r_k-\alpha]^+
\end{equation}

Finally, we formulate the portfolio optimization problem with CVaR as a linear programming problem as bellow.

\begin{align}
\min_{w,\alpha,u_1,...,u_q} & \alpha + (q(1-\beta))^{-1}\sum_{k=1}^q u_k \label{minCVaR} \\ 
s.t.  & u_k \geq -w^T r[k] -\alpha \quad (k=1,...,q)\\
& u_k \geq 0 \quad (k=1,...,q) \\
& 1^T w = 1 \\
& w_j \geq 0 \quad (j=1,...,n) \label{nonneg}
\end{align}

Here, $1^T w = 1$ indicates the sum of all the portfolio weights always equals one, and 1 (left side) denotes a column vector with ones. Beside, $w_j \geq 0$ indicates that investors take a long position of the $j$-th asset,
\section{RM-CVaR: Regularized Multiple $\beta$-CVaR Portfolio}
In this section, we propose a model that takes into account the multiple CVaR values.
The formulation is to minimize the margin between multiple $\beta$ levels of CVaR.

Let $C_{\beta_k}$ be the value of CVaR obtained by solving Eq. (\ref{minCVaR})-(\ref{nonneg}).
Then, minimizing $C$ considering $C_{\beta_k}$ is a main problem of this research.
\begin{problem}
\label{P1}
\begin{align}
    \min_{(w, C)\in X \times R} & C \\
    s.t.  & \barapp{\phi}{w}{\beta_k} \leq C + C_{\beta_k} \quad (k = 1, \ldots, K)
\end{align}
\end{problem}
Let $\barapp{F}{w,\alpha}{\beta}$ be the function likewise Lemma \ref{lemma1}
\begin{equation}
    \barapp{\phi}{w}{\beta}
    =
    \min_{\alpha} \barapp{F}{w, \alpha}{\beta}
    \label{PhiwBeta}
\end{equation}
Using Eq. (\ref{PhiwBeta}), Problem \ref{P1} can be written as follows.
\begin{problem}
\begin{align}
    \min_{(w,C)\in X \times R} & C \\
    s.t. &
        \min_{\alpha_k} \barapp{F}{w, \alpha_k}{\beta_k}
        \leq
        C + C_{\beta_k} \quad (k = 1, \ldots, K)
\end{align}
\label{mCVaR_v1}
\end{problem}
Let $\alpha=(\alpha_1, \cdots, \alpha_K)^T$ and we consider the following Problem \ref{mCVaR_v2}.
\begin{problem}

\begin{align}
    \min_{(w,C,\alpha)\in X \times R \times R^m} & C \\
    s.t. &
        \barapp{F}{w, \alpha_k}{\beta_k}
        \leq
        C + C_{\beta_k} \quad (k = 1, \ldots, K)
\end{align}
\label{mCVaR_v2}
\end{problem}

Here, the following Lemma holds between Problem \ref{mCVaR_v1} and \ref{mCVaR_v2}.
\begin{lemma}
    (1)If $(w^*,C^*)$ is the optimal value for Eq. (\ref{mCVaR_v1}),
    $(w^*,C^*,\alpha^*)$ is the optimal value of Eq. (\ref{mCVaR_v2}).
    (2)If $(w^{**},C^{**},\alpha^{**})$ is the optimal value for Eq. (\ref{mCVaR_v2}),
    $(w^{**},C^{**})$ is the optimal value for Eq. (\ref{mCVaR_v1}).
    \label{lemma3}
\end{lemma}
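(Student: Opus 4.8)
The plan is to prove that Problem \ref{mCVaR_v1} and Problem \ref{mCVaR_v2} are equivalent by establishing a correspondence between their feasible regions and then exploiting the fact that both minimize the \emph{same} objective $C$. The only structural difference between the two is that the inner minimization $\min_{\alpha_k}\barapp{F}{w,\alpha_k}{\beta_k}$ appearing in each constraint of \ref{mCVaR_v1} is ``lifted'' into free decision variables $\alpha_k$ in \ref{mCVaR_v2}. Hence the heart of the argument is the claim that $(w,C)$ is feasible for \ref{mCVaR_v1} if and only if there exists $\alpha=(\alpha_1,\ldots,\alpha_K)^T$ making $(w,C,\alpha)$ feasible for \ref{mCVaR_v2}.

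First I would prove the ``only if'' direction. Fix a feasible $(w,C)$ of \ref{mCVaR_v1}, so that $\min_{\alpha_k}\barapp{F}{w,\alpha_k}{\beta_k}\leq C+C_{\beta_k}$ for every $k$. By Lemma \ref{lemma1}, for each $k$ the set of minimizers of $\barapp{F}{w,\alpha_k}{\beta_k}$ is a nonempty closed bounded interval, so I may pick an actual minimizer $\alpha_k^*$. Then $\barapp{F}{w,\alpha_k^*}{\beta_k}=\min_{\alpha_k}\barapp{F}{w,\alpha_k}{\beta_k}\leq C+C_{\beta_k}$, and with $\alpha^*=(\alpha_1^*,\ldots,\alpha_K^*)^T$ the point $(w,C,\alpha^*)$ is feasible for \ref{mCVaR_v2}. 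The ``if'' direction is immediate: if $(w,C,\alpha)$ is feasible for \ref{mCVaR_v2} then $\barapp{F}{w,\alpha_k}{\beta_k}\leq C+C_{\beta_k}$, and since $\min_{\alpha_k'}\barapp{F}{w,\alpha_k'}{\beta_k}\leq\barapp{F}{w,\alpha_k}{\beta_k}$, the point $(w,C)$ satisfies the constraints of \ref{mCVaR_v1}.

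With the feasibility correspondence in hand, both parts follow by a short contradiction argument on the common objective $C$. For part (1), suppose $(w^*,C^*)$ solves \ref{mCVaR_v1} and let $\alpha^*$ be minimizers chosen as above, so $(w^*,C^*,\alpha^*)$ is feasible for \ref{mCVaR_v2}. If it were not optimal, there would be a feasible $(w',C',\alpha')$ of \ref{mCVaR_v2} with $C'<C^*$; by the ``if'' direction $(w',C')$ is feasible for \ref{mCVaR_v1} with strictly smaller objective, contradicting optimality of $(w^*,C^*)$. Part (2) is the symmetric argument: given an optimum $(w^{**},C^{**},\alpha^{**})$ of \ref{mCVaR_v2}, the ``if'' direction makes $(w^{**},C^{**})$ feasible for \ref{mCVaR_v1}, and any strictly better feasible point of \ref{mCVaR_v1} would, by the ``only if'' direction, lift to a strictly better feasible point of \ref{mCVaR_v2}, a contradiction.

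The one step that genuinely needs care --- and the main obstacle --- is the ``only if'' direction, because replacing the inner minimum by a free variable is legitimate only if that minimum is \emph{attained}. This is exactly where Lemma \ref{lemma1} is essential: its guarantee that the argmin set is nonempty (indeed a bounded interval) lets me select a concrete minimizer $\alpha_k^*$ rather than merely an infimizing sequence. If attainment could fail, a feasible $(w,C)$ of \ref{mCVaR_v1} might admit no exactly-feasible $\alpha$ in \ref{mCVaR_v2}, and the two feasible regions would genuinely differ. Everything else is bookkeeping on the shared objective.
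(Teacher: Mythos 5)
Your proof is correct and follows essentially the same route as the paper's: lift a feasible $(w,C)$ of Problem~\ref{mCVaR_v1} to Problem~\ref{mCVaR_v2} by choosing $\alpha_k^*$ in the argmin (justified by Lemma~\ref{lemma1}'s attainment guarantee), project feasible points of Problem~\ref{mCVaR_v2} back down via $\min_{\alpha_k}\barapp{F}{w,\alpha_k}{\beta_k}\leq\barapp{F}{w,\alpha_k}{\beta_k}$, and conclude both directions by contradiction on the shared objective $C$. Your version is organized a bit more cleanly --- the feasibility correspondence is isolated as a standalone claim, and you make explicit the attainment issue that the paper uses only implicitly --- but the substance is identical.
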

\begin{proof}
    Assume that $(w^*,C^*)$ is the optimal value for Problem \ref{mCVaR_v1}.
    Because $(w^*,C^*)$ is a feasible solution of Problem \ref{mCVaR_v1}, 
    $\min_{\alpha_i} \barapp{F}{w^*, \alpha_k}{\beta_k} \leq C^* + C_{\beta_k}$ holds.
    Define $\alpha^*=(\alpha_1, \ldots, \alpha_K)^T$ as
    $
        \alpha_k^*
        :=
        {\rm argmin}_{\alpha_k} \barapp{F}{w^*, \alpha_k}{\beta_k}.
    $
    Then, $(w^*, C^*, \alpha^*)$ is a feasible solution of Problem \ref{mCVaR_v2}
    since $\barapp{F}{w^*, \alpha_k^*}{\beta_k} \leq C^* + C_{\beta_k}$ holds.
    If $(w^*, C^*, \alpha^*)$ is not the optimal solution of Problem \ref{mCVaR_v2},
    there exists a feasible solution $(\hat{w}, \hat{C}, \hat{\alpha})$ satisfying $\hat{C} < C^*$.
    Then, $
        {\rm min}_{\alpha_k}
            \barapp{F}{
                \hat{w}, \hat{\alpha}_k}{\beta_k
            }
        \leq
        \hat{C} + C_{\beta_k}
        (k = 1, ..., K)
    $ holds.
    Therefore $(\hat{w}, \hat{C})$ is a feasible solution of Problem \ref{mCVaR_v1}, which contradicts that $C*$ is the optimal solution of Problem \ref{mCVaR_v1}.
    %
    Assume that $(w^{**},C^{**},\alpha^{**})$ is the optimal value for Problem \ref{mCVaR_v2}.
    Then, because $(w^{**},C^{**},\alpha^{**})$ is a feasible solution of Problem \ref{mCVaR_v2},
    $
        \barapp{F}{w^{**},\alpha_i^{**}}{\beta_i}
        \leq
        C^{**} + C_{\beta_i} \; (i=1,...,m)
    $ holds.
    $(w^{**}, C^{**})$ is a feasible solution for Problem \ref{mCVaR_v1} since
    $
        {\rm min}_{\alpha_i}
            \barapp{F}{w^{**},\alpha_i}{\beta_i}
            \geq
            \barapp{F}{w^{**},\alpha_i^{**}}{\beta_i}
            \leq
            C^{**} + C_{\beta_i}(i=1,...,m)
    $
    holds.
    if $(w^{**},C^{**})$ is not the optimal solution of
    Problem \ref{mCVaR_v1}, there exists a feasible solution $(\hat{w},\hat{C})$ satisfying $\hat{C} < C^{**}$.
    Define
    $
        \hat{\alpha}
        =
        (\hat{\alpha}_1,...,\hat{\alpha}_m)^T
    $ as $
        \hat{\alpha}_i := \argmin_{\alpha_i} F_{\beta_i}(\hat{w},\alpha_i)
    $.
    Then, $(\hat{w},\hat{C},\hat{\alpha})$ is a feasible solution of Problem \ref{mCVaR_v2}, which contradicts that $C^{**}$ is the optimal solution of Problem \ref{mCVaR_v2}.
\end{proof}
%
\if0
\begin{lemma}
    \begin{align*}
        \sum_{k \in \intset{K}}
        \set{
            \underset{
                \alpha_k \in \rset
            }{
                {\rm min}
            }
            \set{
                \app{g_k}{\alpha_k}
            }
        }
        =
        \underset{
            \alpha \in \realvec{K}
        }{
            {\rm min}
        }
        \set{
            \sum_{k \in \intset{K}}
                \app{g_k}{\alpha_k}
        }
    \end{align*}
\end{lemma}
\begin{lemma}
    \begin{align*}
        & \underset{
            x \in \realvec{n}
        }{
            {\rm argmin}
        }
        \set{
            \sum_{i \in \intset{I}}
            \set{
                c_i^\top x
                +
                \underset{
                    j \in \intset{J}
                }{
                    {\rm max}
                }
                \set{
                    a_{ij}^\top x + b_{ij}
                }
            }
        } \\
        = &
        \underset{
            x \in \realvec{n}
        }{
            {\rm argmin}
        }
        \set{
            \underset{
                t \in \realvec{I}
            }{
                {\rm min}
            }
            \barset{
                \sum_{i \in \intset{I}}
                \set{
                    c_i^\top x + t_i
                }
            }{
                t_i
                \geq
                a_{ij}^\top x + b_{ij}
            }
        } \\
    \end{align*}
\end{lemma}
\begin{theorem}
    The solutions $x$ of the following two optimization problems are equivalent. \\
    Problem 1:
    \begin{align*}
        \underset{
            x \in \realvec{n}
        }{
            {\rm min}
        }
        \set{
            \sum_{k \in \intset{K}}
            \set{
                \underset{
                    \alpha_{k} \in \rset
                }{
                    {\rm min}
                }
                \set{
                    \barapp{\tilde{F}}{\alpha_k, x}{\beta_k}
                }
            }
        }
    \end{align*}
    Problem 2:
    \begin{align*}
        \underset{
            x \in \realvec{n}
        }{
            {\rm min}
        } &
        \set{
            \underset{
                \alpha \in \realvec{K}
            }{
                {\rm min}
            }
            \underset{
                T \in \realmat{K}{Q}
            }{
                {\rm min}
            }
            \set{
                \sum_{
                    k \in \intset{K}
                }
                \sum_{
                    q \in \intset{Q}
                }
                \set{
                    \alpha_k
                    +
                    \frac{1}{1 - \beta_k} t_{kq}
                }
            }
        } \\
        {\rm s.t.} &
            t_{kq} \geq 0 \\
        &
            t_{kq} \geq - y_q^\top x - \alpha_k
    \end{align*}
\end{theorem}
\begin{proof}
    Ato De Kakimasu.
\end{proof}
\fi
%
According to Lemma \ref{lemma3}, Problem \ref{P1} and \ref{mCVaR_v2} are a equivalent problem.
When ${r[1], \ldots, r[Q]}$ are obtained by sampling, 
the function $\barapp{F}{w, \alpha}{\beta}$ is approximated as follows.
\begin{equation}
    \barapp{F}{w, \alpha}{\beta}
    \simeq
    \alpha + \frac{1}{Q \paren{1 - \beta}}
    \sum_{q=1}^Q
    [ -w^\top r[q] - \alpha]^+
\end{equation}

Finally, we derive the Regularized Multiple $\beta$-CVaR Portfolio, where the objective is to minimize multiple CVaR values and control the portfolio turnover. 
The changes of the turnover during each rebalancing period are directly related to transaction costs, market impacts and taxes. 
Controling the portfolio turnover is realized through imposing
the $L1$-regularization term as
\begin{equation}
    \| w - w^- \|_1 = \sum_{i=1}^n |w_i - w_i^-|
\end{equation}
where $w_i^-$ denotes the portfolio weight before rebalancing.


From the above, the Regularized Multiple $\beta$-CVaR Portfolio optimization problem can be formulated as follows:
\begin{problem}
\begin{align}
        \min_{(w,C,\alpha) \in X \times R \times R^K}
        & C + \lambda \| w - w^- \|_1 \\
        {\rm s.t.}
        &
            \barapp{\tilde{F}}{w, \alpha_k}{\beta_k} \leq C + {C}_{\beta_k} (k = 1, \ldots, K)
    \end{align}
    \label{mCVaR_v3}
\end{problem}

We can easily proof Problem \label{mCVaR_v3} is linear programming problem similar to the usual CVaR minimization problem.
%
\begin{theorem}
    The Regularized Multiple $\beta$ CVaR Portfolio optimization problem
    is equivalent to the following linear programming problem.
    \begin{align*}
        \underset{
            C, w, \alpha, t, u
        }{
            {\rm min}
        }
        & \; C + \sum_{i = 1}^n u_i \\
        {\rm subject \; to}
        & \; u_i \geq \lambda \paren{w_i - w_i^-} \\
        & \; u_i \geq - \lambda \paren{w_i - w_i^-} \\
        & \; t_{qk} \geq 0 \\
        & \; t_{qk} \geq - w^\top r[q] - \alpha_k \\
        & \;
            \alpha_k + \frac{1}{Q \paren{1 - \beta_k}}
            \sum_{q=1}^Q t_{qk}
            \leq
            C + C_{\beta_k} \\
        & 1^T w = 1 \\
        & w_i \geq 0 \quad (i=1,...,n)
    \end{align*}
\label{thm}
\end{theorem}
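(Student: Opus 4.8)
The plan is to establish the equivalence by the standard epigraph-and-linearization technique, handling the two nonlinear pieces of Problem \ref{mCVaR_v3} separately: the $L1$-regularization term in the objective and the positive-part operators $[\cdot]^+$ hidden inside each $\barapp{\tilde{F}}{w,\alpha_k}{\beta_k}$. For each piece I would introduce auxiliary variables, replace the nonlinear expression by a linear one together with a pair of inequalities, and then argue that the reformulation is exact in both directions: every feasible (respectively optimal) point of one problem yields a feasible (respectively optimal) point of the other with the same objective value. Since $X$ already encodes the linear constraints $1^\top w = 1$ and $w_i \ge 0$, these carry over verbatim and need no further treatment.

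First I would linearize the objective. Writing $\lambda \| w - w^- \|_1 = \sum_{i=1}^n \lambda |w_i - w_i^-|$ and using $\lambda \ge 0$, I invoke the identity $\lambda|w_i - w_i^-| = \min\{u_i : u_i \ge \lambda(w_i - w_i^-),\ u_i \ge -\lambda(w_i - w_i^-)\}$. Introducing the $u_i$ as new variables and adding the two inequalities gives $u_i \ge \lambda|w_i - w_i^-|$; because each $u_i$ enters the new objective $C + \sum_i u_i$ with a strictly positive coefficient, minimization drives $u_i$ down to its lower bound, so at any optimum $u_i = \lambda|w_i - w_i^-|$ and $\sum_i u_i = \lambda\|w - w^-\|_1$. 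This shows the two objectives agree at optimality.

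Next I would linearize the constraints. For each pair $(q,k)$ I introduce $t_{qk}$ subject to $t_{qk} \ge 0$ and $t_{qk} \ge -w^\top r[q] - \alpha_k$, so that $t_{qk} \ge [-w^\top r[q] - \alpha_k]^+$. Substituting the sampled expression for $\barapp{\tilde{F}}{w,\alpha_k}{\beta_k}$, the $k$-th constraint becomes $\alpha_k + \frac{1}{Q(1-\beta_k)}\sum_{q=1}^Q t_{qk} \le C + C_{\beta_k}$. The crucial observation is that, since $\beta_k < 1$, the coefficient $1/(Q(1-\beta_k))$ is strictly positive, so the left-hand side is increasing in each $t_{qk}$; hence this $\le$ constraint is satisfiable if and only if it is satisfiable at the smallest admissible value $t_{qk} = [-w^\top r[q] - \alpha_k]^+$, which is precisely $\barapp{\tilde{F}}{w,\alpha_k}{\beta_k} \le C + C_{\beta_k}$. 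Thus the projection of the LP feasible set onto $(w,C,\alpha)$ coincides with the feasible set of Problem \ref{mCVaR_v3}, and conversely any feasible point of Problem \ref{mCVaR_v3} lifts to a feasible LP point by setting the $t_{qk}$ to those positive parts.

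I expect the only genuine subtlety to be the exactness of the $t_{qk}$ substitution: unlike the $u_i$, the variables $t_{qk}$ do not appear in the objective, so exactness cannot be argued by saying the objective pushes them to their bound. Instead it must be argued purely from feasibility together with the monotonicity of the constraint in $t_{qk}$ (positive coefficient, $\le$ direction, $\beta_k < 1$), and this is the step I would state most carefully. Once both reformulations are combined, every feasible point of Problem \ref{mCVaR_v3} corresponds to a feasible point of the displayed program with equal objective value and vice versa, so their optimal values and minimizers coincide; this is exactly the claimed equivalence, and since all remaining constraints are linear the displayed program is a linear program.
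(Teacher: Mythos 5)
Your proof is correct and takes essentially the same route the paper intends: the standard LP linearization, with auxiliary variables $u_i$ for the $L1$ term in the objective and $t_{qk}$ for the positive parts $[-w^\top r[q]-\alpha_k]^+$ in the constraints. It is in fact considerably more complete than the paper's own one-line proof, which only gestures at replacing the absolute-value terms (miscalling the max-reformulation ``softmax'') and never addresses the point you correctly single out as the real subtlety --- that exactness of the $t_{qk}$ substitution must come from monotonicity of the $\leq$ constraint in $t_{qk}$ rather than from pressure of the objective.
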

\begin{proof}
    Using a standard approach in optimization, we replace each absolute value term
     $\lambda \| w - w^- \|_1$ with $softmax$.
    Then objective and constraints are all linear.
\end{proof}
\section{Experiment}
In this section, we will report the results of our empirical studies with well-known benchmarks.
First, we evaluate how the portfolio changes as the $\beta$ level changes.
Depending on how $\beta$ is chosen, a completely different portfolio may be constructed.
Next, we compare the out-of-sample performance among several portfolio strategy including our proposed.

\subsection{Dataset}
In the experiments, we use well-known academic benchmarks called Fama and French (FF) datasets \cite{fama1992cross} to ensure the reproducibility of the experiment.
This FF dataset is public and is readily available to anyone.
The FF datasets have been recognized as standard datasets and heavily adopted in finance research because of its extensive coverage to asset classes and very long historical data series.
We use FF25 dataset and FF48 dataset.
For example, FF25 dataset includes 25 portfolios formed on the basis of size and book-to-market ratio and FF48 dataset contains monthly returns of 48 portfolios representing different industrial sectors.
We use both datasets as monthly data from January 1989 to December 2018. 

\subsection{Experimental Settings}
\label{sec:Experimental_Settings}
In our empirical studies, the tested portfolio models have the following meanings:
\begin{itemize}
    \item ‘‘1/N’’ stands for equally-weighted (1/N) portfolio \cite{demiguel2007optimal}.
    \item ‘‘MV’’ stands for minimum-variance portfolio. We use the latest 10 years (120 months) to calculate covariance matrix.
    \item ‘‘DRP’’ stands for the doubly regularized miminum-variance portfolio \cite{shen2014doubly}. We use the latest 10 years (120 months) to calculate covariance matrix, and set combinations of two coefficients for regularization terms to  $\lambda_1$ = $\{0.001, 0.005, 0.01, 0.05\}$ and $\lambda_2$ = $\{0.001, 0.005, 0.01, 0.05\}$.
    \item ‘‘EGO’’ stands for the Kelly growth optimal portfolio with ensemble learning \cite{shen2019kelly}. We set $n_1$ (number of resamples) = 50, $n_2$ (size of each resample) = 5$\tau$, $\tau$ (number of periods of return data) = 120, $n_3$ (number of resampled subsets) = 50, $n_4$ (size of each subset) = $n^{0.7}$, $n$ is number of assets.
    \item ‘‘CVaR’’ stands for minimum CVaR portfolio with $\beta$. We implemnet 5 patterns of $\beta$ = \{0.95, 0.96, 0.97, 0.98, 0.99\}, and use the latest 10 years (120 months) to calulate each model.
    \item ‘‘ACVaR’’ stands for the average portfolio calculated by the average of minimum CVaR portfolio of different $\beta$ = \{0.95, 0.96, 0.97, 0.98, 0.99\} at each time point.
    \item ‘‘RM-CVaR’’ stands for our proposed portfolio. We set $K$ = 5 ($k=1, ... ,K$) as 5 patterns of $\beta_k$ = \{0.95, 0.96, 0.97, 0.98, 0.99\} to calculate $C_{\beta_k}$ and set $Q$ (number of sampling periods of return data) as \{10 years (120 months), 7 years (84 months)\}. For the coefficient of the regularization term, we implement 4 patterns of $\lambda$ = $\{0.001, 0.005, 0.01, 0.05\}$.
    We also implement $\lambda = 0 $ to compare with the best RM-CVaR. The RM-CVaR portfolio presented in Algorithm \ref{algo} is straightforward to implement.
\end{itemize}

We use the first-half period, from January 1989 to December 2003, as the in-sample period to decide the hyper-parameters of each model. 
After that, we use the second half-period, from January 2004 to December 2018 as the out-of-sample periods. 
Each portfolio is updated by sliding one-month-ahead and carrying out a monthly forecast. 
\begin{algorithm}
\caption{RM-CVaR Portfolio}
    \begin{algorithmic}[1]
        \renewcommand{\algorithmicrequire}{\textbf{Input:}}
        \renewcommand{\algorithmicensure}{\textbf{Output:}}
        \REQUIRE
            $K$ probability levels $\beta_k \in \paren{0, 1} \; \paren{k = 1, \ldots, K}$, \\
            a number of sampling periods $Q \in {\iset}^+$, \\
            a coeffient of the regularization term $\lambda \in \rset^+$ and \\
            a return matrix $Y \in \realmat{n}{\paren{T+Q}}$
        \ENSURE
            a weight matrix $W \in \realmat{n}{\paren{T+1}}$
        \FOR {$t = 1, \ldots, T+1$}
            \STATE
                $R \leftarrow Y \intset{t : Q+t-1}$
            \STATE
                Solve the linear programming introduced in \\
                Theorem \ref{thm}
            \STATE 
                Contain the solution $w^*$ to $W \intset{t}$
        \ENDFOR
        \RETURN $W$ 
    \end{algorithmic} 
    \label{algo}
\end{algorithm}
\subsection{Performance Measures}

In the first experiment, we define the weight difference of two minimum CVaR portfolios which have $\beta_i$ and $\beta_j$ are as bellow.
\begin{equation}
    {\bf Diff} = \frac{1}{T}\sum_{t=1}^{T} ||w_t^{\beta_i}-w_t^{\beta_j}||_1
\label{Diff}
\end{equation}
We set $\beta_i$ and $\beta_j$ as $\{0.96,0.95\}$, $\{0.97,0.96\}$, $\{0.98,0.97\}$, and $\{0.99,0.98\}$.
A large Diff indicates that the two portfolios are different.
This measure is similar to turnover defined below.

Next, we compare the out-of-sample performance of the portfolios.
In evaluating the portfolio strategy, we use the following measures that are widely used in the field of finance \cite{brandt2010portfolio}.

The portfolio return at time $t$ is defined as
\begin{equation}
    R_t = \sum_{i=1}^n r_{it}w_{it-1} 
\end{equation}
where $r_{it}$ is the return of $i$ asset at time $t$, $w_{it-1}$ is the weight of $i$ asset in the portfolio at time $t-1$, and $n$ is the number of asset. 
We evaluate the portfolio strategy by its annualized return (AR), risk as the standard deviation of return (RISK), risk/return (R/R) as return divided by risk as for the portfolio strategy. R/R is a risk-adjusted return measure for a portfolio strategy.  
\begin{align}
    {\bf AR} &= \prod_{t=1}^T (1+R_t)^{12/T}-1 \\
    {\bf RISK} &= \sqrt{\frac{12}{T-1}\times(R_t-\mu)^2}\\
    {\bf R/R} &= {\bf AR}/{\bf RISK}
\end{align}

Here, let $\mu = (1/T) \sum_{t=1}^T R_t$ be the average return of the portfolio.

We also evaluate maximum drawdown (MaxDD), which is yet another widely used risk measures \cite{magdon2004maximum,shen2017portfolio}, for the portfolio strategy:
Namely, MaxDD is defined as the largest drop from an extremum:
\begin{align}
    {\bf MaxDD} &= \min_{k \in [1,T]}\left(0,\frac{W_k}{\max_{j \in [1,k]} W_{j}}-1\right) \\
    W_k &= \prod_{l=1}^k (1+R_l).
\end{align}
where $W_k$ be the cumulative return of the portfolio until time $k$.

The turnover (TO) indicates the volumes of rebalancing. 
Since a high turnover inevitably generates high explicit and implicit trading costs thus reducing the portfolio return, it has been recognized as an important performance metric. 
The one-way annualized turnover is calculated as an average absolute value of the rebalancing trades over all the trading periods:
\begin{align}
    {\bf TO} = \frac{12}{2(T-1)}\sum_{t=1}^{T-1} ||w_t-w_{t-1}^{+}||_1
\end{align}
where $T-1$ indicates the total number of the rebalancing periods and $w_{t-1}^{+}$ is the re-normalized portfolio weight vector before rebalance.

\begin{equation}
w_{t-1}^{+} = \frac{w_{t-1} \otimes r_{t}}{w_{t-1}\top r_{t}}
\end{equation}

where $r_{t}$ is the return vector of the assets at time $t$, $w_{t-1}$ is the weight vector at time $t-1$ and the operator $\otimes$ denotes the Hadamard product. 


\begin{table}[t]
\caption{The weight difference of two minimum CVaR portfolios in the out-of-sample period.}
\label{CVaR_Diff_Wgt}
\begin{tabular}{|c|r|r|r|r|r|}
\hline
      & \multicolumn{1}{c|}{$\beta$96-95} & \multicolumn{1}{c|}{$\beta$97-96} & \multicolumn{1}{c|}{$\beta$98-97} & \multicolumn{1}{c|}{$\beta$99-98} & \multicolumn{1}{c|}{Avg} \\ \hline
FF25 & 42.66\%                     & 43.72\%                     & 48.77\%                     & 57.61\%                     & 48.19\%                  \\ \hline
FF48 & 23.95\%                     & 36.13\%                     & 25.20\%                     & 72.28\%                     & 39.39\%                  \\ \hline
Avg  & 33.31\%                     & 39.93\%                     & 36.99\%                     & 64.95\%                     & 43.79\%                  \\ \hline
\end{tabular}
\end{table}

\subsection{Experimental Results}
Table \ref{CVaR_Diff_Wgt} shows the weight difference of two minimum CVaR portfolios in the out-of-sample period.
Only 1\% difference in the $\beta$ level changes the portfolio weight on average by 48\% in FF25 and 39\% in FF48.
We confirm CVaR portfolio weights are highly sensitive to the $\beta$ levels.

\begin{figure*}[t]
 \begin{minipage}{0.5\hsize}
  \begin{center}
   \includegraphics[width=\textwidth]{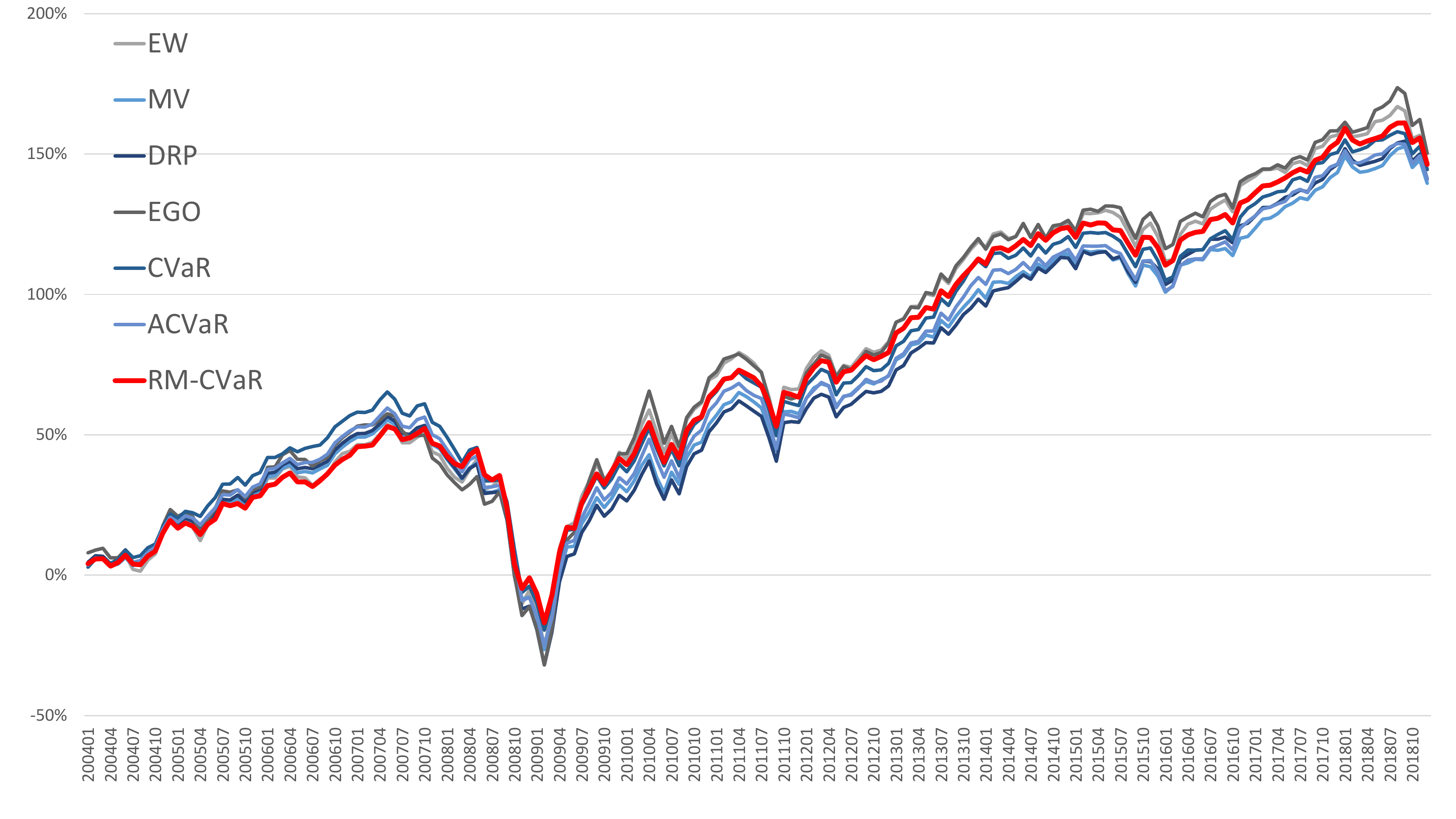}
  \end{center}
  \caption{The cumulative return in the out-of-sample period for FF25 datasets.}
  \label{fig:FF25}
 \end{minipage}
 \begin{minipage}{0.5\hsize}
  \begin{center}
   \includegraphics[width=\textwidth]{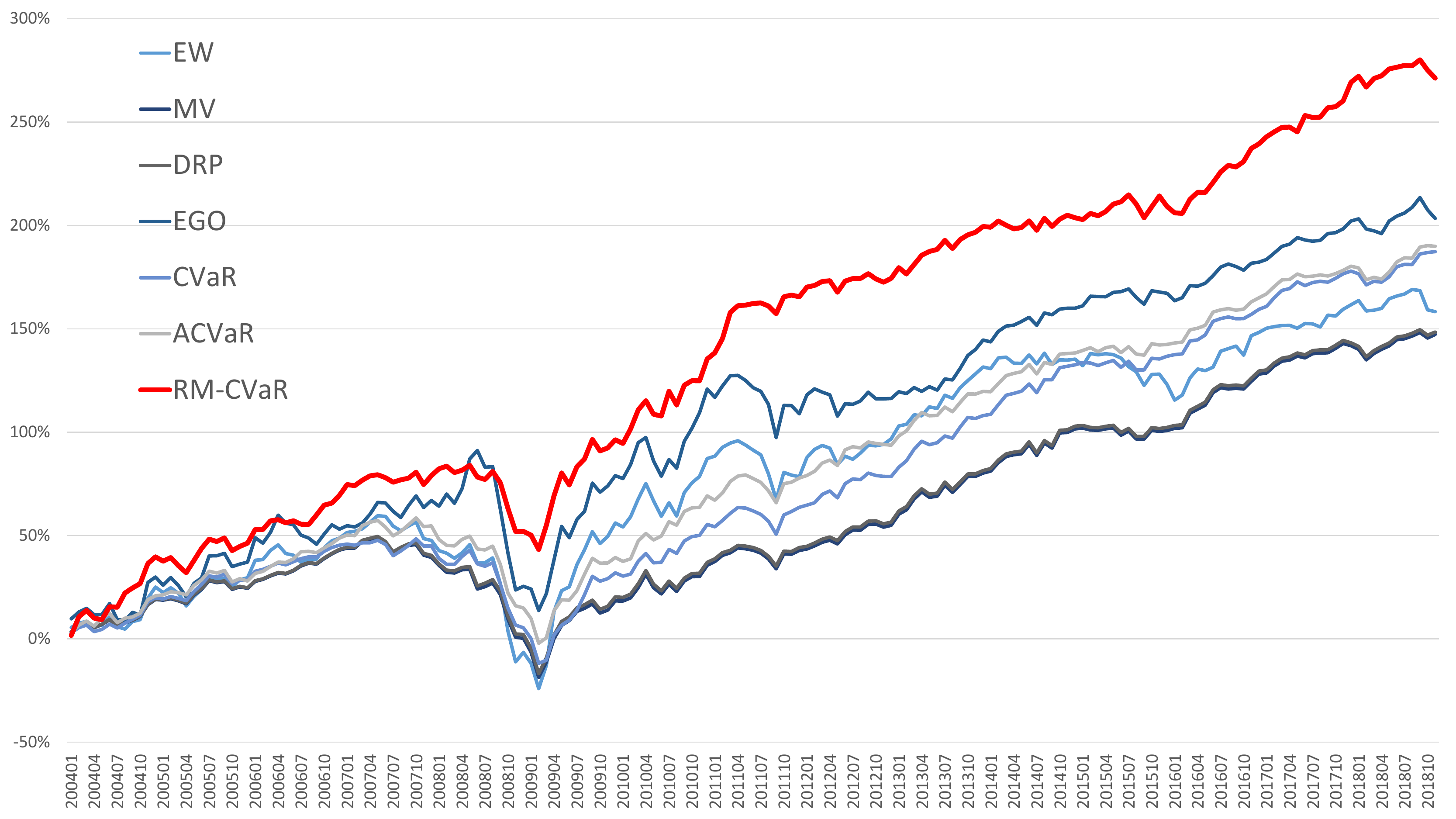}
  \end{center}
  \caption{The cumulative return in the out-of-sample period for FF48 datasets.}
  \label{fig:FF48}
 \end{minipage}
\end{figure*}

\begin{table*}[t]
\centering
\caption{The performance of each portfolio in out-of-sample period for FF25 dataset (upper panel) and FF48 dataset (lower panel).}
\label{results}
\small	
\begin{tabular}{|c|r|r|r|r|r|r|r|r|r|r|r|r|}
\hline
\multirow{2}{*}{FF25} & \multicolumn{1}{c|}{\multirow{2}{*}{EW}} & \multicolumn{1}{c|}{\multirow{2}{*}{MV}} & \multicolumn{1}{c|}{\multirow{2}{*}{DRP}} & \multicolumn{1}{c|}{\multirow{2}{*}{EGO}} & \multicolumn{1}{c|}{\multirow{2}{*}{ACVaR}} & \multicolumn{5}{c|}{CVaR}                                                                                                                 & \multicolumn{2}{c|}{RM-CVaR}                           \\ \cline{7-13} 
                      & \multicolumn{1}{c|}{}                    & \multicolumn{1}{c|}{}                    & \multicolumn{1}{c|}{}                     & \multicolumn{1}{c|}{}                     & \multicolumn{1}{c|}{}                           & \multicolumn{1}{c|}{95} & \multicolumn{1}{c|}{96} & \multicolumn{1}{c|}{97} & \multicolumn{1}{c|}{98} & \multicolumn{1}{c|}{99} & \multicolumn{1}{c|}{$\lambda=0$} & \multicolumn{1}{c|}{Best $\lambda$} \\ \hline
AR [\%]                   & 8.27                                   & 8.45                                   & 8.48                                    & 8.58                                    & 8.48                                          & 8.46                    & 8.36                    & 8.35                    & 8.73                    & 8.42                    & \textbf{9.03}          & 8.95                      \\ \hline
RISK [\%]                  & 18.13                                  & 15.24                                  & 15.67                                   & 18.71                                   & 15.62                                         & 16.15                   & 15.42                   & 15.54                   & 15.64                   & 15.74                   & 16.23                  & \textbf{15.11}            \\ \hline
R/R                   & 0.46                                     & 0.55                                     & 0.54                                      & 0.46                                      & 0.54                                            & 0.52                      & 0.54                      & 0.54                      & 0.56                      & 0.53                      & 0.56                     & \textbf{0.59}               \\ \hline
MaxDD [\%]                & -57.63                                 & -58.14                                 & -61.21                                  & -61.76                                  & -59.44                                        & -57.75                  & -56.75                  & -60.81                  & -59.54                  & -62.21                  & -54.14                 & \textbf{-52.81}           \\ \hline
TO [\%]                   & 16.95                                  & 31.10                                  & \textbf{8.75}                           & 71.52                                   & 22.19                                         & 17.46                   & 19.99                   & 24.97                   & 21.52                   & 29.98                   & 1000.98                & 33.57                     \\ \hline \hline

\multirow{2}{*}{FF48} & \multicolumn{1}{c|}{\multirow{2}{*}{EW}} & \multicolumn{1}{c|}{\multirow{2}{*}{MV}} & \multicolumn{1}{c|}{\multirow{2}{*}{DRP}} & \multicolumn{1}{c|}{\multirow{2}{*}{EGO}} & \multicolumn{1}{c|}{\multirow{2}{*}{ACVaR}} & \multicolumn{5}{c|}{CVaR}                                                                                                                 & \multicolumn{2}{c|}{RM-CVaR}                           \\ \cline{7-13} 
                      & \multicolumn{1}{c|}{}                    & \multicolumn{1}{c|}{}                    & \multicolumn{1}{c|}{}                     & \multicolumn{1}{c|}{}                     & \multicolumn{1}{c|}{}                           & \multicolumn{1}{c|}{95} & \multicolumn{1}{c|}{96} & \multicolumn{1}{c|}{97} & \multicolumn{1}{c|}{98} & \multicolumn{1}{c|}{99} & \multicolumn{1}{c|}{$\lambda=0$} & \multicolumn{1}{c|}{Best $\lambda$} \\ \hline
AR [\%]                   & 8.14                                   & 8.99                                   & 9.09                                    & 11.11                                   & 11.83                                         & 11.68                   & 10.79                   & 11.54                   & 11.96                   & 12.92                   & 15.75                  & \textbf{17.29}            \\ \hline
RISK [\%]                 & 19.27                                  & \textbf{11.77}                         & 11.77                                   & 20.61                                   & 12.53                                         & 12.27                   & 11.87                   & 12.42                   & 13.47                   & 14.49                   & 16.46                  & 15.61                     \\ \hline
R/R                   & 0.42                                     & 0.76                                     & 0.77                                      & 0.54                                      & 0.94                                            & 0.95                      & 0.91                      & 0.93                      & 0.89                      & 0.89                      & 0.96                     & \textbf{1.11}               \\ \hline
MaxDD [\%]                & -59.81                                 & -50.84                                 & -50.25                                  & -57.39                                  & -47.22                                        & -46.98                  & -45.21                  & -45.36                  & -48.23                  & -50.38                  & -35.29                 & \textbf{-34.93}           \\ \hline
TO [\%]                   & 36.73                                  & 27.48                                  & \textbf{17.15}                          & 75.80                                   & 37.04                                         & 41.31                   & 38.87                   & 35.37                   & 41.57                   & 38.36                   & 960.03                 & 750.48                    \\ \hline
\end{tabular}
\end{table*}

Table \ref{results} reports the overall performance measures of RM-CVaR, our proposed portfolio, and the compared 10 portfolios introduced in Section \ref{sec:Experimental_Settings}.
Among the comparisons of various portfolios, where the best performance is highlighted in bold.
In both datasets, the proposed RM-CVaR with $\lambda$ achieves the highest R/R and the lowest MaxDD.
Not surprisingly, RM-CVaR differs from ACVaR, which is the simple average of five CVaR portfolios.
RM-CVaR also exceeds individual $\beta$ levels of CVaR by R/R and MaxDD.
In FF25 datasets, RM-CVaR without $\lambda$ outperforms all the compared portfolios in terms of AR but has the worst TO.
Introducing the regularization term $\lambda$, the TO is considerably suppressed, and RISK, R/R and maxDD are also the best.
In FF48 datasets, In FF48, RM-CVaR with $\lambda$ has the best AR, R/R, and MaxDD, but the TO is very high. 
This is because the $\lambda$ selected in this experiment does not sufficiently suppress TO.
%

Furthermore, in order to compare the trend and dynamics of the each portfolio return, Figure \ref{fig:FF25} and \ref{fig:FF48} show the cumulative return over the out-of-sample periods for the FF25 and FF48 datasets. 
Although there is not much difference in the FF25 dataset, RM-CVaR apparently outperforms the others with the visible margins in the FF48 datasets in most of the time periods.
We can confirm that RM-CVaR avoides a large drawdown.

\section{Conclusion}
Our study makes the following contributions:
\begin{itemize}
    \item We propose RM-CVaR: Regularized Multiple $\beta$-CVaR Portfolio and prove that the optimization problem is written as a linear programming. 
    \item We demonstrate that the CVaR portfolio dramatically changes depending on the $\beta$ level.
    \item RM-CVaR is a superior performance of having both higher risk-adjusted returns and lower maximum drawdown.
\end{itemize}

Our future work includes incorporating the subsampling method such as \cite{shen2017portfolio,shen2019kelly}.

\clearpage


\begin{thebibliography}{}

\bibitem[\protect\citeauthoryear{Ardia \bgroup \em et al.\egroup
  }{2017}]{ardia2017impact}
David Ardia, Guido Bolliger, Kris Boudt, and Jean-Philippe Gagnon-Fleury.
\newblock The impact of covariance misspecification in risk-based portfolios.
\newblock {\em Annals of Operations Research}, 254(1-2):1--16, 2017.

\bibitem[\protect\citeauthoryear{Artzner \bgroup \em et al.\egroup
  }{1999}]{artzner1999coherent}
Philippe Artzner, Freddy Delbaen, Jean-Marc Eber, and David Heath.
\newblock Coherent measures of risk.
\newblock {\em Mathematical finance}, 9(3):203--228, 1999.

\bibitem[\protect\citeauthoryear{Brandt}{2010}]{brandt2010portfolio}
Michael~W Brandt.
\newblock Portfolio choice problems.
\newblock In {\em Handbook of financial econometrics: Tools and techniques},
  pages 269--336. Elsevier, 2010.

\bibitem[\protect\citeauthoryear{DeMiguel \bgroup \em et al.\egroup
  }{2007}]{demiguel2007optimal}
Victor DeMiguel, Lorenzo Garlappi, and Raman Uppal.
\newblock Optimal versus naive diversification: How inefficient is the 1/n
  portfolio strategy?
\newblock {\em The review of Financial studies}, 22(5):1915--1953, 2007.

\bibitem[\protect\citeauthoryear{DeMiguel \bgroup \em et al.\egroup
  }{2009}]{demiguel2009generalized}
Victor DeMiguel, Lorenzo Garlappi, Francisco~J Nogales, and Raman Uppal.
\newblock A generalized approach to portfolio optimization: Improving
  performance by constraining portfolio norms.
\newblock {\em Management science}, 55(5):798--812, 2009.

\bibitem[\protect\citeauthoryear{Fama and French}{1992}]{fama1992cross}
Eugene~F Fama and Kenneth~R French.
\newblock The cross-section of expected stock returns.
\newblock {\em the Journal of Finance}, 47(2):427--465, 1992.

\bibitem[\protect\citeauthoryear{Fan \bgroup \em et al.\egroup
  }{2012}]{fan2012vast}
Jianqing Fan, Jingjin Zhang, and Ke~Yu.
\newblock Vast portfolio selection with gross-exposure constraints.
\newblock {\em Journal of the American Statistical Association},
  107(498):592--606, 2012.

\bibitem[\protect\citeauthoryear{Gotoh and Takeda}{2005}]{gotoh2005linear}
Jun-ya Gotoh and Akiko Takeda.
\newblock A linear classification model based on conditional geometric score.
\newblock {\em Pacific Journal of Optimization}, pages 277--296, 2005.

\bibitem[\protect\citeauthoryear{Gotoh and Takeda}{2011}]{gotoh2011role}
Jun-ya Gotoh and Akiko Takeda.
\newblock On the role of norm constraints in portfolio selection.
\newblock {\em Computational Management Science}, 8(4):323, 2011.

\bibitem[\protect\citeauthoryear{Gotoh \bgroup \em et al.\egroup
  }{2014}]{gotoh2014interaction}
Jun-ya Gotoh, Akiko Takeda, and Rei Yamamoto.
\newblock Interaction between financial risk measures and machine learning
  methods.
\newblock {\em Computational Management Science}, 11(4):365--402, 2014.

\bibitem[\protect\citeauthoryear{Magdon-Ismail and
  Atiya}{2004}]{magdon2004maximum}
Malik Magdon-Ismail and Amir~F Atiya.
\newblock Maximum drawdown.
\newblock {\em Risk Magazine}, 17(10):99--102, 2004.

\bibitem[\protect\citeauthoryear{Mansini \bgroup \em et al.\egroup
  }{2015}]{mansini2015portfolio}
Renata Mansini, W{\l}odzimierz Ogryczak, and M~Grazia Speranza.
\newblock Portfolio optimization with transaction costs.
\newblock In {\em Linear and Mixed Integer Programming for Portfolio
  Optimization}, pages 47--62. Springer, 2015.

\bibitem[\protect\citeauthoryear{Markowitz}{1952}]{markowitz1952portfolio}
Harry Markowitz.
\newblock Portfolio selection.
\newblock {\em The journal of finance}, 7(1):77--91, 1952.

\bibitem[\protect\citeauthoryear{McNeil \bgroup \em et al.\egroup
  }{2005}]{mcneil2005quantitative}
Alexander~J McNeil, R{\"u}diger Frey, Paul Embrechts, et~al.
\newblock {\em Quantitative risk management: Concepts, techniques and tools},
  volume~3.
\newblock Princeton university press Princeton, 2005.

\bibitem[\protect\citeauthoryear{Michaud}{1989}]{michaud1989markowitz}
Richard~O Michaud.
\newblock The markowitz optimization enigma: Is ‘optimized’optimal?
\newblock {\em Financial Analysts Journal}, 45(1):31--42, 1989.

\bibitem[\protect\citeauthoryear{Nakagawa \bgroup \em et al.\egroup
  }{2018}]{nakagawa2018risk}
Kei Nakagawa, Mitsuyoshi Imamura, and Kenichi Yoshida.
\newblock Risk-based portfolios with large dynamic covariance matrices.
\newblock {\em International Journal of Financial Studies}, 6(2):52, 2018.

\bibitem[\protect\citeauthoryear{Pflug}{2000}]{pflug2000some}
Georg~Ch Pflug.
\newblock Some remarks on the value-at-risk and the conditional value-at-risk.
\newblock In {\em Probabilistic constrained optimization}, pages 272--281.
  Springer, 2000.

\bibitem[\protect\citeauthoryear{Rockafellar \bgroup \em et al.\egroup
  }{2000}]{rockafellar2000optimization}
R~Tyrrell Rockafellar, Stanislav Uryasev, et~al.
\newblock Optimization of conditional value-at-risk.
\newblock {\em Journal of risk}, 2:21--42, 2000.

\bibitem[\protect\citeauthoryear{Shen and Wang}{2016}]{shen2016portfolio}
Weiwei Shen and Jun Wang.
\newblock Portfolio blending via thompson sampling.
\newblock In {\em IJCAI}, pages 1983--1989, 2016.

\bibitem[\protect\citeauthoryear{Shen and Wang}{2017}]{shen2017portfolio}
Weiwei Shen and Jun Wang.
\newblock Portfolio selection via subset resampling.
\newblock In {\em Thirty-First AAAI Conference on Artificial Intelligence},
  2017.

\bibitem[\protect\citeauthoryear{Shen \bgroup \em et al.\egroup
  }{2014}]{shen2014doubly}
Weiwei Shen, Jun Wang, and Shiqian Ma.
\newblock Doubly regularized portfolio with risk minimization.
\newblock In {\em Twenty-Eighth AAAI Conference on Artificial Intelligence},
  2014.

\bibitem[\protect\citeauthoryear{Shen \bgroup \em et al.\egroup
  }{2015}]{shen2015portfolio}
Weiwei Shen, Jun Wang, Yu-Gang Jiang, and Hongyuan Zha.
\newblock Portfolio choices with orthogonal bandit learning.
\newblock In {\em Twenty-Fourth International Joint Conference on Artificial
  Intelligence}, 2015.

\bibitem[\protect\citeauthoryear{Shen \bgroup \em et al.\egroup
  }{2019}]{shen2019kelly}
Weiwei Shen, Bin Wang, Jian Pu, and Jun Wang.
\newblock The kelly growth optimal portfolio with ensemble learning.
\newblock In {\em Proceedings of the AAAI Conference on Artificial
  Intelligence}, volume~33, pages 1134--1141, 2019.

\bibitem[\protect\citeauthoryear{Takeda and Sugiyama}{2008}]{takeda2008nu}
Akiko Takeda and Masashi Sugiyama.
\newblock $\nu$-support vector machine as conditional value-at-risk
  minimization.
\newblock In {\em Proceedings of the 25th international conference on Machine
  learning}, pages 1056--1063, 2008.

\bibitem[\protect\citeauthoryear{Tibshirani}{1996}]{tibshirani1996regression}
Robert Tibshirani.
\newblock Regression shrinkage and selection via the lasso.
\newblock {\em Journal of the Royal Statistical Society: Series B
  (Methodological)}, 58(1):267--288, 1996.

\bibitem[\protect\citeauthoryear{Zou and Yuan}{2008}]{zou2008regularized}
Hui Zou and Ming Yuan.
\newblock Regularized simultaneous model selection in multiple quantiles
  regression.
\newblock {\em Computational Statistics \& Data Analysis}, 52(12):5296--5304,
  2008.

\end{thebibliography}

\end{document}